\newcommand{\case}[1]{\paragraph*{Case #1:}}
\newtheorem{theorem}{Theorem}[section]
\newtheorem{conjecture}[theorem]{Conjecture}
\newtheorem{example}{Example}
\begin{document}
\begin{center}
{\large \bf The Role of Mathematical Folk Puzzles in Developing mathematical Thinking and Problem-Solving Skills}
\end{center}
\begin{center}
  Duaa Abdullah$^{1}$  \hspace{0.2 cm} Jasem Hamoud$^
  {2}$\\[6pt]
  $^{1,2}$ Physics and Technology School of Applied Mathematics and Informatics \\
Moscow Institute of Physics and Technology, 141701, Moscow region, Russia\\[6pt]
Email: $^{1}${\tt duaa1992abdullah@gmail.com}, $^{2}${\tt hamoud.math@gmail.com}
\end{center}
\noindent
\begin{abstract}
This paper covers a variety of mathematical folk puzzles, including geometric (Tangrams, dissection puzzles), logic, algebraic, probability (Monty Hall Problem, Birthday Paradox), and combinatorial challenges (Eight Queens Puzzle, Tower of Hanoi). It also explores modern modifications, such as digital and gamified approaches, to improve student involvement and comprehension. Furthermore, a novel concept, the "Minimal Dissection Path Problem for Polyominoes," is introduced and proven, demonstrating that the minimum number of straight-line cuts required to dissect a polyomino of N squares into its constituent units is $\mathrm{N}-1$. This problem, along with other puzzles, offers practical classroom applications that reinforce core mathematical concepts like area, spatial reasoning, and optimization, making learning both enjoyable and effective.
\end{abstract}

\noindent\textbf{AMS Classification 2010:} 397C70, 97U40, 97F90 , 97D60, 97D70, 97A40.

\noindent\textbf{Keywords:} Mathematical, Folk Puzzles, Education, Problem-Solving Skills, Puzzles, Classroom Activities, Learning, School children.

\noindent\textbf{UDC:} 37.013,51

\section{Introduction}

Mathematical folk puzzles in many cultures and for many centuries have been a teaching and learning tool for schoolchildren. In the 21st century, school children from different countries, enjoy doing these puzzles. The puzzles provide an intriguing and informal context for the children to learn or explore a wealth of mathematical concepts and skills, including geometrical shapes, area and perimeter, symmetry, angles, patterns and relationships, and multiplicative thinking. Yet, adults who attend to children often ignore these explorations and learning opportunities, seeing the child's engagement only as play. In a school culture, interest in mathematical puzzles as a challenge and competition may lead to achievement.

Math puzzles are a valuable and effective educational tool for teaching and learning math concepts among students of all ages. These captivating puzzles not only provide a fun and engaging experience but also encourage critical thinking and problem-solving skills. They offer an exciting opportunity for children to explore and develop various mathematical skills in a playful manner. The benefits of incorporating math puzzles into the classroom are numerous. First and foremost, puzzles promote critical thinking and logical reasoning skills. They need students to analyze and strategize, enabling them to think creatively and develop innovative problemsolving techniques. Through the process of solving puzzles, students learn to think outside the box, approaching mathematical challenges from different perspectives.

As well, math puzzles encourage perseverance and resilience. As students encounter challenging puzzles, they develop patience and persistence in their pursuit of solutions. They learn to embrace the process of trial and error, understanding that mistakes are an essential part of the learning experience. This resilience nurtures a growth mindset, where students view challenges as opportunities for growth and improvement. In addition to cognitive benefits, math puzzles also enhance collaboration and teamwork. It promotes a positive and inclusive learning environment, where students support and learn from one another.

By tapping into the innate curiosity and playfulness of children, math puzzles make learning an enjoyable and fulfilling experience. They stimulate interest and motivation, transforming the perception of mathematics from dry and monotonous to exciting and captivating. As students become immersed in the world of math puzzles, they develop a genuine love for the subject and an eagerness to explore its endless possibilities. So, we present some puzzles that may engender interest, challenge, and mathematical commitment, all in an enjoyable context.

\section{Historical Background of Mathematical Folk Puzzles}
The historical background of mathematical folk puzzles reveals their deep-rooted presence in various cultures, showcasing the universal appeal of mathematics. The educational value of these puzzles is significant, as they encourage exploration and creativity while reinforcing mathematical principles~\cite{Boaler}.
Mathematical folklore, like any good folklore, always reflects both historical and cultural contexts in which folk (or professional) mathematicians lived. Thus, these puzzles give us a glimpse of educational practices and mathematical attitudes of professional and non-professional mathematicians in previous generations. Professional mathematicians have always had problems that attracted their interest, originally those which could be at least approximately solved by geometrical methods. On the other hand, folk mathematicians have been concerned with another class of problems those for which the answer is obtained by counting. It seems that such puzzles, relating to real-life situations, were not simply for amusement purposes but were useful. They were everyday puzzles which were solved mentally summing sticks in their hands or calculating the capacity of vessels were of vital importance in peasant life~\cite{Picciotto}.

In the process of solving mathematical folk puzzles, properties of various objects and numbers are considered and, anyway, neglected in school curricula, and the history of mathematics is taught without examples. Moreover, it makes children think and even create themselves. Finally, on the one hand, schoolchildren can often use folk puzzles and the investigation of their new and then forgotten ideas for studying the school curriculum. It can unite topics usually related to one another in geography classes or physics classrooms. The majority of mathematical folk puzzles originate from geographical or household problems \cite{Danesi,Iortser}.

\section{Educational Value and Role of puzzles in learning}
While solving mathematical puzzles or problems posed by a teacher, in particular those of a "folk" nature, children not only learn a certain mathematical concept and its properties. They often apply their knowledge acquired during classes, encountering mathematical tasks formulated in the form of games or puzzles, and solving fairly complicated problems as well. For example, during a lesson, children learn to determine the denominators equal to those among the fractions with the numerators $1,1 / 2,1 / 3$, and also some of the problems such as ``Plenty of Ones'', ``What Was Scaled?'', ``One Summand of the Sum is Known'', or others containing a trivial task as a part of the puzzle. The advantage of so-called "folk problems" is that they often clearly indicate what measures or operations should be made for the problem to be solved.

An important characteristic of puzzles is their close relation in comparison with other problems of a logical or mathematical nature. While solving puzzles, one often obtains logical skills that are sufficient for solving problems of much interest, in particular. Such ability can and should be developed in both boys and girls. After all, it can be done in the form of a game, and in an interesting way. Along such lines, the tradition of solving school puzzles has developed, which was borrowed from folklore, and puzzles prepared with pedagogical goals have long existed, which were not only a test of psychic abilities but also a means to develop them.

In fact, puzzles are increasingly being recognized as an important teaching and learning strategy. In this light, teachers should strive to incorporate puzzles in their day-to-day activities, not just in mathematics, but in other subjects as well. This, combined with the use of digital interactive whiteboards and websites, can contribute to better student engagement, understanding, and attainment. The specific focus of this research is to explore and evaluate what the key purposes, principles, and strategies are when using puzzles as a teaching and learning tool in a lesson.

In today's educational environment, students are faced with the task of solving various types of problems. However, mathematics consistently ranks as one of the most challenging subjects for teachers in terms of helping students develop their problem-solving skills, regardless of their academic level or age group \cite{Pt}. The utilization of puzzles in mathematics instruction is not a recent phenomenon, as it has been employed for centuries. Numerous studies have demonstrated that puzzles are an excellent means of not only enhancing knowledge, but also fostering a genuine enjoyment of math lessons among students \cite{Aldila}.

Puzzles play a crucial role in developing higher-order thinking skills. These enigmatic problems cannot be solved through quantitative research or mere routine analysis, but rather require determination and logical reasoning. According to Sukuraman's research, a puzzle can be defined as a perplexing problem that necessitates the use of one's reasoning abilities to find a solution, rather than simply considering the subject matter at hand. Sukuraman further asserts that one of the primary reasons why puzzles are highly motivating for students is that they naturally captivate their attention. Puzzles arouse curiosity and instill a desire to persevere until the correct solution is reached \cite{Ahdhianto}.

\section{Geometric Puzzles Types}
Geometric puzzles come in a wide range of shapes, demanding your spatial reasoning, visual skills, and comprehension of geometric principles. Here's a breakdown of the most frequent types:

\subsection{Tangrams}
It is an ancient Chinese puzzle, consisting of a seven-piece ( 5 triangles, 1 square and 1 parallelogram); flat puzzle with tans that can be fitted together to create different shapes (see Figure~\ref{fign1}). Students can experiment with differences and discover geometric features with great curiosity and fun.

\begin{figure}[H]
\centering
  \includegraphics[width=\textwidth]{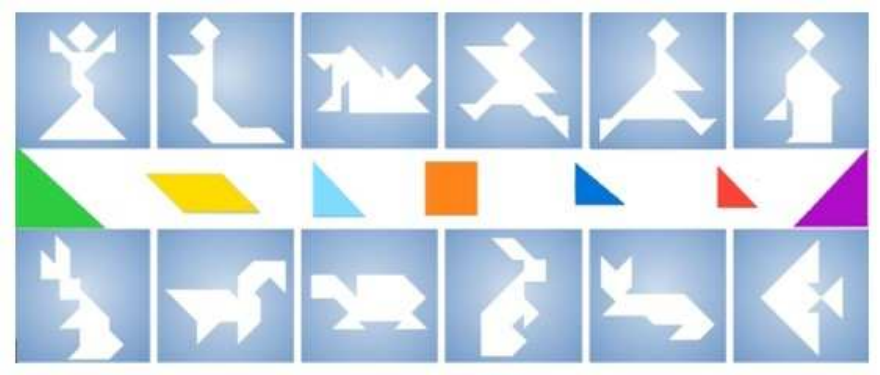}
\caption{ The dissection of a square into a triangle.}~\label{fign1}
\end{figure}

Tangram puzzles are commonly used as an educational tool for both youngsters and adults. They provide a hands-on approach to understanding geometric concepts like rotation, symmetry, and congruence. Players improve their awareness of spatial relationships and problem-solving skills by manipulating the seven tans into various shapes. Experiential learning can help consolidate mathematical concepts, making them more accessible and understandable.

Tangrams not only improve mathematics skills, but they also aid to develop other vital competencies. These involve fine motor abilities, as moving the small components necessitates accuracy and control. Furthermore, tangram puzzles can help improve concentration and patience because they often demand focus and repeated efforts. Tangrams are often used in classrooms and educational curricula due to improving these skills of geometric thinking, problem solving and spatial reasoning.

\subsection{Dissection Puzzles and A Visual Delight}
Dissection puzzles are geometric puzzles where a shape is cut into pieces that can be rearranged to form a different shape. These puzzles are not only fun but also a great way to develop spatial reasoning and problem-solving skills.

An example of this is the dissection of a square into a triangle (see Figure~\ref{fign2}); this is a puzzle that involves dissecting a square into several pieces that can be rearranged to form a triangle.
\begin{figure}[H]
\begin{center}
  \includegraphics[width=10cm, height=5cm]{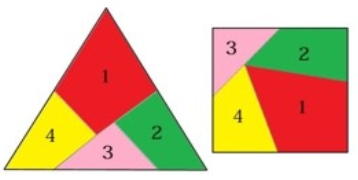}
\caption{Tangram puzzles of people or animals.}~\label{fign2}
\end{center}
\end{figure}

Another is the pentomino puzzle, which is a puzzle that includes twelve shapes, each consisting of five squares connected from edge to edge. The goal is to arrange the pentominoes to form different shapes, such as rectangles or squares.

\subsection{ Geometric Jigsaw Puzzles}
These puzzles consist of pieces that fit together to form a specific shape or image. They help students, through hands-on learning, understand how to combine and manipulate different shapes to form a suitable shape or pattern. We also add These puzzles develop problem-solving skills through critical thinking as children must figure out how each piece fits together to create the complete picture, which requires patience, time, perseverance, and creativity and imagination to complete the challenge, which enhances spatial awareness and encourages children to think outside the box.

Another type is called Mosaic Jigsaw Puzzles which uses tiles or coloured paper, students can create mosaic designs. This activity helps them understand tiling and the concept of area. Overall, jigsaw puzzles are an effective and fun tool in education.

\subsection{Pattern Blocks}
Pattern blocks are colourful geometric shapes in six different colours, usually made of wood or plastic. Small and lightweight, they enable students to create different shapes (triangles, squares, and hexagons) to create patterns or designs that reinforce the concepts of symmetry and space.

Pattern blocks are also a fun educational tool that students are naturally drawn to, and can be used to teach concepts in math, geometry, and art. They build fine motor skills, enhance creativity, and stimulate hands-on exploration.

\section{ Digital and Gamified Approaches}
Integrating technology into education has a significant positive impact in addition to enhancing engagement and learning outcomes. There are many methods, including:

\subsection{Gamified Learning}
Studies show that gamified teaching methods significantly boost motivation among primary school students. Incorporating game-like elements into geometry lessons can make learning more enjoyable and effective \cite{Luo}. Gamified learning environments make learning more fun and interactive due to their challenges, reward system, and possible storytelling, thus stimulating positive behavior and increasing motivation \cite{Cal}.

\subsection{Augmented Reality (AR)}
AR can be leveraged to create realistic interactive experiences with real-world environments that are richer, making abstract concepts tangible through virtual dissection or reenactment of historical events, ensuring that students remember information for longer, and this approach has been shown to significantly improve the understanding of complex engineering concepts \cite{Flores}.

\section{Classroom Activities}
Here are some classroom activities that can be implemented in the classroom, that focus on shapes, spatial thinking, design, and geometry:
\begin{enumerate}
    \item \textbf{Shape Hunt:} Have students search for objects that match certain geometric shapes and then name them out loud. Provide students with a shape search recording sheet where they can write and draw the shapes they find.
    \item \textbf{Design a Building:} Using digital tools or physical materials, students can act as architects to design a building using specific geometric shapes, enhancing their spatial reasoning skills \cite{Puig}. As you can involve students in designing their ideal learning space, they can brainstorm ideas for a classroom that is right for them.
    \item \textbf{Geometry Art:} Encourage students to create their own artwork; using geometric patterns and shapes. This integrates creativity with mathematical concepts.
\end{enumerate}

\section{Logic Puzzles and Algebraic Puzzles}
Logic puzzles and algebraic puzzles are fun challenges that require mathematical knowledge, deductive reasoning, and problem-solving skills. Logic puzzles often involve finding creative solutions to difficult situations by considering multiple perspectives and exploring unexpected connections between elements. They require little general knowledge and rely on basic mathematical principles. Two common types of logic puzzles are River Crossing Puzzles and Knights and Knaves Puzzles. There is a uniqueness in the scenarios of each type and the rules that it has, which makes participants solve them. These puzzles provide great settings for practicing critical thinking and also challenge the student, solving them to think logistically along with a strategy that adheres to specific rules \cite{Rakshit}.

In addition, they are a fun and powerful educational tool that builds problem-solving skills for all ages. Algebraic puzzles may involve variables like algebraic equations, and solving them requires advanced applications and mathematics such as equations and calculus. Both equation puzzles and function puzzles serve as valuable educational tools that enhance mathematical reasoning, and these puzzles not only test mathematical skills but also encourage creative thinking and logical deduction.

They challenge individuals to apply their knowledge creatively while developing critical thinking skills essential for advanced mathematics. Engaging with these puzzles can lead to improved problem-solving abilities and a deeper appreciation for algebra \cite{van}.

Equation puzzles require solvers to find unknown values that satisfy given mathematical equations. Common examples include solving for x in equations like $2 \mathrm{x}+3=11$, and another like Magic Squares need to arrange numbers in a square grid so that the sums of each row, column, and diagonal are equal. These techniques encourage a deeper understanding of algebraic principles and enhance problem-solving skills \cite{Malavolta}. Function puzzles involve understanding the behaviour of mathematical functions, often requiring solvers to determine outputs based on given inputs or vice versa. Examples of function composition need finding $f(g(x))$ where $f(x)$ and $g(x)$ are defined functions \cite{Vadim}. Function puzzles promote a robust understanding of mathematical concepts, encouraging learners to explore relationships between variables in innovative ways \cite{Imperio}.

\section{Probability Puzzles}
Probability puzzles often reveal counterintuitive results and challenge our understanding of probability theory. Two well-known examples are:

\begin{enumerate}
  \item Monty Hall Problem is a classic probability puzzle based on a game show scenario where you are presented with three doors; behind one door is a car (the prize), and behind the other two doors are goats. You have to choose one door, then host Monty Hall, who knows where the car is, opens one of the unchosen doors to reveal a goat. Monty then offers the contestant the option of switching to the other unopened door.
  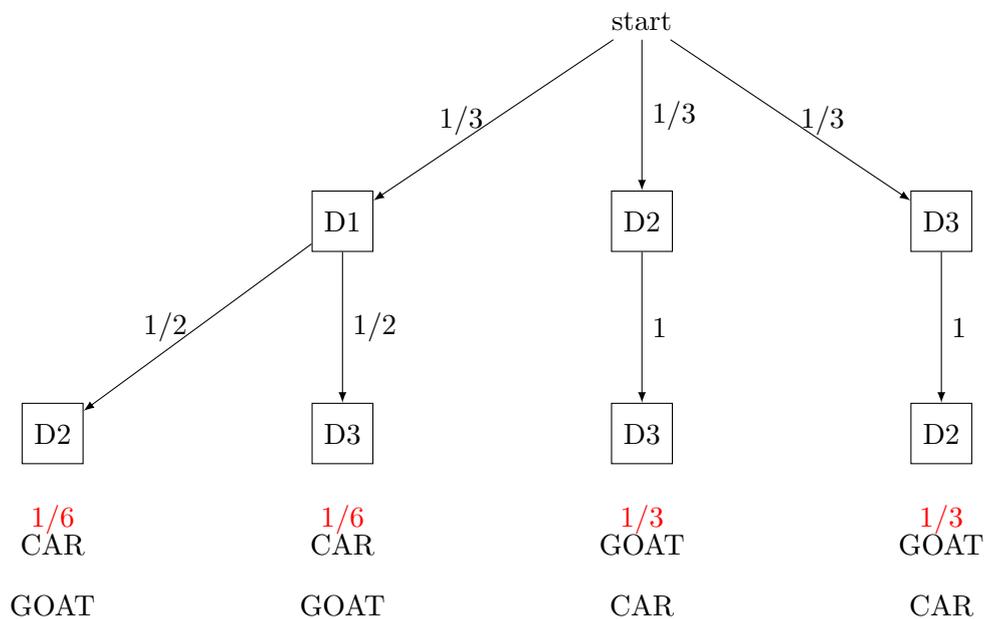
\begin{figure}[H]
\centering
\begin{tikzpicture}[scale=.6, node distance=20mm and 30mm,
  box/.style={rectangle, draw, minimum size=8mm, text centered},
  >=latex
]
\node (start) {start};

\node[box, below left=of start] (D1) {D1};
\node[box, below=of start] (D2) {D2};
\node[box, below right=of start] (D3) {D3};

\node[box, below left=of D1] (D2b) {D2};
\node[box, below=of D1] (D3b) {D3};

\node[box, below=of D2] (D3c) {D3};

\node[box, below=of D3] (D2c) {D2};

% Edges from start
\draw[->] (start) -- node[left] {1/3} (D1);
\draw[->] (start) -- node[right] {1/3} (D2);
\draw[->] (start) -- node[right] {1/3} (D3);

% Edges from D1
\draw[->] (D1) -- node[left] {1/2} (D2b);
\draw[->] (D1) -- node[right] {1/2} (D3b);

\draw[->] (D2) -- node[right] {1} (D3c);

\draw[->] (D3) -- node[right] {1} (D2c);

\node[below=4mm, red] at (D2b.south) {1/6};
\node[below=4mm, red] at (D3b.south) {1/6};
\node[below=4mm, red] at (D3c.south) {1/3};
\node[below=4mm, red] at (D2c.south) {1/3};

\node[below=8mm] at (D2b.south) {CAR};
\node[below=8mm] at (D3b.south) {CAR};
\node[below=8mm] at (D3c.south) {GOAT};
\node[below=8mm] at (D2c.south) {GOAT};

\node[below=16mm] at (D2b.south) {GOAT};
\node[below=16mm] at (D3b.south) {GOAT};
\node[below=16mm] at (D3c.south) {CAR};
\node[below=16mm] at (D2c.south) {CAR};

\end{tikzpicture}
\caption{Decision tree Monty Hall problem}~\label{fign3}
\end{figure}
The solution to this problem (see Figure~\ref{fign3}) is as follows: Counterintuitively, the best strategy is to always switch doors. Because when you first choose a door, there's a $\frac{1}{3}$ chance that you picked the car and a $\frac{2}{3}$ chance that you picked a goat. If you switch after Monty reveals a goat, you win if your initial choice was wrong (which happens $\frac{2}{3}$ of the time).\par 
Thus, by switching, your probability of winning the car increases to $\frac{2}{3}$, while sticking with your initial choice gives you only $\frac{1}{3}$ chance of winning.
\item Birthday Paradox refers to the finding that in a group of n randomly picked persons, the likelihood that at least two people share the same birthday exceeds $50 \%$ with surprisingly few individuals. Surprisingly, just 23 people are required for this chance to exceed $50 \%$.
\end{enumerate}

The ``paradox'' is that one would expect a far larger number of persons to be required before the likelihood of sharing a birthdate becomes substantial \cite{Alvarado}. The reasoning is the paradox arises from calculating the probability of no shared birthdays:
\begin{enumerate}
  \item For the first person, there are 365 choices (assuming no leap year).
  \item For the second person, there are 364 choices to avoid matching the first person's birthday.
  \item Continuing this way for n people gives:
\end{enumerate}
\[
\overline{\mathrm{p}}(\mathrm{n})=1 \times\left(1-\frac{1}{365}\right) \times\left(1-\frac{2}{365}\right) \times \ldots \times\left(1-\frac{\mathrm{n}-1}{365}\right)=\prod_{\mathrm{n}=2}^{\mathrm{n}-1}\left(1-\frac{\mathrm{n}-1}{365}\right) .
\]
The probability that at least two people share a birthday is $\mathrm{p}(\mathrm{n})=1-\overline{\mathrm{p}}(\mathrm{n}) .$ Approximating the probability of no shared birthday:
$$
\overline{\mathrm{p}}(\mathrm{n}) \approx\left(\frac{364}{365}\right)^{\binom{\mathrm{n}}{2}} .
$$
\noindent  
Therefore, the probability of at least one shared birthday:

$$
\mathrm{p}(\mathrm{n}) \approx 1-\left(\frac{364}{365}\right)^{\binom{\mathrm{n}}{2}} .
$$
\noindent 
For 23 people, the probability of sharing a birthday is approximately 0.500477 \cite{Feliciano}. This startling conclusion demonstrates how human perception frequently underestimates probabilities in combinatorial circumstances \cite{Bevan}. The birthday problem has practical uses, such as detecting cryptographic attacks and determining the danger of hash collisions. Both the Monty Hall Problem and the Birthday Paradox illustrate how human intuition can often misjudge probabilities. The Monty Hall Problem shows that switching choices can lead to better outcomes, while the Birthday Paradox reveals how shared events can be more likely than expected in groups. Two examples for teaching concepts in probability and statistics, revealing how our intuitions can lead us astray in understanding likelihoods and outcomes.

\section{Combinatorial Puzzles}
Combinatorial puzzles are a type of mathematical challenge that involves arranging, selecting, and manipulating discrete items. They are enjoyable for both recreational and logical learners. These are typically combinatorial problems that demand thought to solve. Classic subcategories in this discipline include chessboard problems and permutation puzzles.

A combinatorial puzzle entails mixing distinct elements or pieces in a variety of ways to get a desired outcome or solution. These puzzles frequently use dissimilar and non-symmetrical parts to increase the number of possible combinations. The goal for puzzle designers is to produce simple puzzles with few components that are both engaging and puzzling, ideally with only one correct combination.

\subsection{Chessboard Problems}
Chessboard problems classification Chessboard problems are a class of combinatorial problems that make use of the structure and properties of a chessboard. Knight's tours, N queen's problems, etc. While these problems usually consist of putting the chess pieces on the board in a particular configuration or determining how many arrangements are possible under specific limitations. Some common examples include:

\begin{enumerate}
  \item The Eight Queens Puzzle: Arrange eight queens on an $8 \times 8$ chessboard so that none threaten one another.
  \item The Knight's Tour: Given a chessboard and a knight, find a sequence of moves for the knight that will visit every square exactly once.
  \item Domination Problems: The goal is to find the minimum number of your pieces required to attack or control all squares on the board. They tend to be very tricky, featuring intricate chess piece movements and solution strategies.
\end{enumerate}

\subsection{Permutation Puzzles}
Permutation puzzles are also a significant subtype of combinatorial puzzles that involve rearranging a finite set of objects into a particular order or configuration. Many logic puzzles have a huge number of possible states and require a systematic approach to be solved. A few popular ones include:

\begin{enumerate}
  \item The Rubik's Cube: A 3D combination puzzle, invented in 1974 and gaining worldwide popularity in the 1980s, its objective to twist and turn the cube of smaller cubes back to original configuration with each face one solid colour.
  \item Sliding Puzzle: A flat board with numbered tiles that can be slid into an empty space to arrange them in an order.
  \item Combination Locks: Devices that must be opened with a series of numbers or symbols. Many permutation puzzles have ties to group theory and can be studied with mathematical techniques to determine solution strategies and lower bounds on the length of a solution.
  \item Tower of Hanoi: This a classic problem that involves moving a stack of disks from one rod to another, while obeying certain rules. A puzzle consists three rods and a set of disks of different size. The objective is to transfer the entire stack to another rod following these specific rules.
\end{enumerate}

The number of movements required at least is $2 \mathrm{n}-12 \mathrm{n}-1$, where $n$ is the number of disks. The puzzle demonstrates recursion and problem-solving strategies. All of these puzzles not only entertain but also help develop problem-solving skills and logical thinking abilities.

\section{The Minimal Dissection Path Problem for Polyominoes}
While traditional dissection puzzles focus on rearranging pieces to form new shapes, a novel area of inquiry lies in the efficiency of the dissection process itself. We introduce the ``Minimal Dissection Path Problem for Polyominoes'', which seeks to determine the minimum number of straight-line cuts required to transform a given polyomino into a set of its constituent unit squares. This problem has implications for understanding the inherent structural complexity of polyominoes and could inform the design of new types of puzzles where the challenge lies not in rearrangement, but in efficient deconstruction.

Consider a polyomino P composed of N unit squares. A straight-line cut is defined as a cut that extends from one edge of the polyomino to another, or from an edge to an internal point, effectively dividing the polyomino into two or more smaller pieces. The goal is to reduce P to N individual unit squares using the fewest possible cuts.

\begin{conjecture}~\label{conj1}
For any polyomino P with N unit squares, the minimum number of straight-line cuts required to dissect P into N individual unit squares is at least $\mathrm{N}-1$.
\end{conjecture}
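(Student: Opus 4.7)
My plan is to prove the lower bound by a simple counting argument on the number of connected pieces, together with a topological lemma that bounds the splitting power of a single cut.

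First, I would fix the notation $p_k = $ number of connected pieces present after $k$ cuts. Since $P$ is a single connected polyomino, $p_0 = 1$, while the target configuration of $N$ separated unit squares corresponds to $p_K = N$, where $K$ denotes the total number of cuts performed.

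Second, I would prove the key lemma $p_{k+1} \leq p_k + 1$. Any cut acts on a single current piece $P_i$, and because a straight line is simply connected, removing such a chord from a connected planar region leaves at most two connected components. Hence $P_i$ is split into at most two pieces while every other piece is untouched, so the total piece count rises by at most one per cut. Iterating gives $p_k \leq 1 + k$, and imposing $p_K = N$ forces $K \geq N - 1$, which is precisely the bound in the conjecture. For sharpness I would sketch that peeling off one unit square at a time, following a leaf-removal order on a spanning tree of the cell-adjacency graph of $P$, achieves exactly $N-1$ cuts and thus matches the bound.

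The main obstacle is pinning down the formal meaning of ``straight-line cut''. The phrasing in the excerpt (``dividing the polyomino into two or more smaller pieces'', ``extends from one edge of the polyomino to another'') admits a lax reading in which a single straight line running along the boundary of a non-convex piece separates it into three or more components in a single stroke. Under that reading the $E$-shaped heptomino can be dissected in five cuts, which contradicts the conjecture for $N = 7$. A rigorous proof therefore requires the convention that each cut operates on one current piece and that the cut segment lies in the interior of that piece, so that the outcome is exactly a two-way split. I would record this convention as a short remark preceding the inductive counting step; without such a convention the statement as written is not true.
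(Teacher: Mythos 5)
Your argument is essentially the same as the paper's: track the number of connected pieces $C_k$ after $k$ cuts, observe $C_0=1$, argue each cut raises the count by at most one so that $C_k\le 1+k$, and conclude $k\ge N-1$ from $C_k=N$; your sufficiency sketch via peeling off an end square also matches the paper's. The one substantive difference is that you are more careful than the paper on the key lemma: the paper asserts that even a cut passing through several pieces (or through a non-convex piece) increases the piece count by at most one, which is false under the lax ``guillotine'' reading of a straight-line cut --- exactly the comb/E-shaped counterexample you raise --- whereas you correctly isolate the needed convention that each cut acts on a single piece and splits it into exactly two. That convention is what actually makes the induction $C_{k+1}\le C_k+1$ sound, so your remark repairs a genuine weakness in the paper's own argument rather than merely restating it.
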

\begin{proof}
Let $P$ be a polyomino composed of $N$ unit squares. Our goal is to dissect $P$ into $N$ individual unit squares using the minimum number of straight-line cuts. A straight-line cut is defined as a cut that divides a piece of the polyomino into two or more smaller pieces. Each cut effectively increases the number of disconnected pieces by at most one.

\case{1} \textbf{Lower Bound ( $k \geq N-1$ ):} 
Consider the initial state where we have one connected polyomino, $P_{0}$, which represents a single piece. Let $C_{i}$ be the number of connected pieces after $i$ cuts. Initially, $C_{0}=1$. When a straight-line cut is made, it can pass through one or more existing pieces. However, each single cut can, at most, increase the total number of distinct pieces by one. For example, if a cut passes through a single piece, it divides that piece into two, increasing the total count by one. If a cut passes through multiple pieces simultaneously (e.g., a long cut through a row of squares), it still effectively separates each piece it traverses into two, but the net increase in the number of disconnected components is still at most one per cut, as a single cut cannot create more than one new boundary that separates previously connected parts \cite{Demaine}.

Therefore, after $k$ cuts, the maximum number of pieces, $C_{k}$, can be expressed as:

$$
C_{k} \leq C_{0}+k C_{k} \leq 1+k
$$

To dissect the polyomino $P$ into $N$ individual unit squares, we must achieve a state where there are $N$ distinct pieces. Thus, we require $C_{k}= N$. Substituting this into the inequality:

$$
N \leq 1+k k \geq N-1
$$

This proves that at least $N-1$ straight-line cuts are necessary to separate a polyomino of $N$ unit squares into $N$ individual unit squares. This lower bound is intuitive: to obtain $N$ items from one, you need to make at least $N-1$ separations.
\case{2} \textbf{Sufficiency (Can $N-1$ cuts always be achieved?)}
The sufficiency of $N-1$ cuts depends on the geometry of the polyomino. While $N-1$ cuts are necessary, they are not always sufficient if the cuts cannot be made in a way that isolates individual squares without intersecting previously made cuts in a complex manner that prevents further simple separation. However, for any polyomino, it is always possible to achieve $N-1$ cuts by strategically making cuts along the grid lines that define the unit squares.

Consider any polyomino $P$ composed of $N$ unit squares. Each unit square is defined by a grid of horizontal and vertical lines. To separate all $N$ squares, we need to make cuts along these internal grid lines. A polyomino with $N$ squares has $N-1$ internal boundaries between adjacent squares if it is connected. For example, a $1 \times N$ polyomino (a straight line of $N$ squares) has $N-1$ internal vertical boundaries. Each of these boundaries can be severed with a single straight-line cut, resulting in $N$ individual squares using exactly $N-1$ cuts.

For more complex polyominoes, the strategy involves systematically isolating each square. Any internal edge between two adjacent unit squares within the polyomino can be severed by a straight-line cut. If we consider the dual graph of the polyomino, where each square is a node and an edge exists if two squares share a boundary, then dissecting the polyomino into individual squares is equivalent to removing all edges in this dual graph. A minimal set of cuts corresponds to a minimum edge cut that disconnects all nodes. Since the dual graph of a connected polyomino is a connected graph with $N$ nodes, it must have at least $N-1$ edges. Each cut can effectively remove one or more of these internal edges. By making cuts along the grid lines that separate adjacent squares, we can ensure that each cut contributes to isolating at least one square or separating a larger piece into two smaller ones. Since each cut can at most increase the number of pieces by one, and we start with one piece and want $N$ pieces, we need exactly $N-1$ cuts to achieve this if each cut is optimally placed to increase the piece count by one. This is always possible by following the internal grid lines. For example, one can always make a cut that separates an 'end' square from the rest of the polyomino, reducing the problem to dissecting a smaller polyomino. Repeating this process $N-1$ times will yield $N$ individual squares.

Therefore, for any polyomino $P$ with $N$ unit squares, $N-1$ straight-line cuts are both necessary and sufficient to dissect $P$ into $N$ individual unit squares.
So the minimum number of straight-line cuts required to dissect a polyomino of $N$ unit squares into $N$ individual unit squares is exactly $N-1$.
\end{proof}

\subsection*{Examples of Minimal Dissection Paths}

To illustrate Conjecture 1 and its proof, let us consider a few specific polyomino examples and their minimal dissection paths.

\begin{example}[\textbf{The Domino ( $\mathbf{N} \boldsymbol{=} \mathbf{2}$ )}]~\label{ex1}
A domino is a polyomino composed of two unit squares. According to Conjecture 1, the minimum number of cuts required to dissect a domino into two individual unit squares is $N-1=2-1=1$ cut. 
\end{example}

As shown in Figure~\ref{fign4}, a single straight-line cut along the shared edge of the two squares is sufficient to separate them. This visually confirms the conjecture for the simplest case.

\begin{figure}[H]
\centering
\begin{tikzpicture}[scale=.8]
\draw [line width=2pt] (0,6)-- (0,4);
\draw [line width=2pt] (0,4)-- (2,4);
\draw [line width=2pt] (2,4)-- (2,6);
\draw [line width=2pt] (2,6)-- (0,6);
\draw [line width=2pt] (2,6)-- (4,6);
\draw [line width=2pt] (4,6)-- (4,4);
\draw [line width=2pt] (4,4)-- (2,4);
\draw [line width=2pt] (0,3)-- (2,3);
\draw [line width=2pt] (2,3)-- (2,1);
\draw [line width=2pt] (2,1)-- (0,1);
\draw [line width=2pt] (0,1)-- (0,3);
\draw [line width=2pt,dash pattern=on 1pt off 1pt] (3,3)-- (3,1);
\draw [line width=2pt] (4,3)-- (4,1);
\draw [line width=2pt] (4,3)-- (6,3);
\draw [line width=2pt] (6,3)-- (6,1);
\draw [line width=2pt] (6,1)-- (4,1);
\end{tikzpicture}
\caption{Minimal Dissection of a Domino. A $1 \times 2$ domino polyomino is shown, with a single straight line cut separating it into two individual unit squares.}~\label{fign4}

\end{figure}
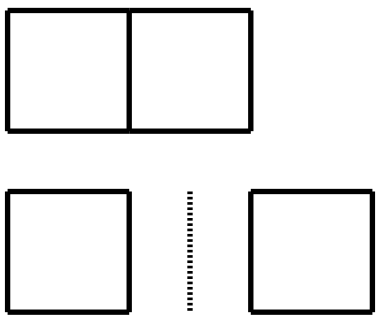

\begin{example}[The L-Tromino ( $\mathbf{N} \boldsymbol{=} \mathbf{3}$ )]~\label{ex2}
An L-tromino is a polyomino composed of three unit squares arranged in an 'L' shape. For an L-tromino, Conjecture 1 predicts that $N-1= 3-1=2$ cuts are necessary and sufficient to dissect it into three individual unit squares.
\end{example}

Figure~\ref{fign5} illustrates how two strategic straight-line cuts can achieve this. The first cut can separate one of the end squares, leaving a domino. The second cut then separates the remaining two squares. The specific placement of cuts can vary, but the total number of cuts remains two.
\begin{figure}[H]
\centering
\begin{tikzpicture}[scale=.8]
\draw [line width=2pt] (0,6)-- (0,4);
\draw [line width=2pt] (0,4)-- (2,4);
\draw [line width=2pt] (2,4)-- (2,6);
\draw [line width=2pt] (2,6)-- (0,6);
\draw [line width=2pt] (2,6)-- (4,6);
\draw [line width=2pt] (4,6)-- (4,4);
\draw [line width=2pt] (4,4)-- (2,4);
\draw [line width=2pt] (0,3)-- (2,3);
\draw [line width=2pt] (2,3)-- (2,1);
\draw [line width=2pt] (2,1)-- (0,1);
\draw [line width=2pt] (0,1)-- (0,3);
\draw [line width=2pt] (4,3)-- (4,1);
\draw [line width=2pt] (4,3)-- (6,3);
\draw [line width=2pt] (6,3)-- (6,1);
\draw [line width=2pt] (6,1)-- (4,1);
\draw [line width=2pt] (0,6)-- (0,8);
\draw [line width=2pt] (0,8)-- (2,8);
\draw [line width=2pt] (2,8)-- (2,6);
\draw [line width=2pt] (7,3)-- (9,3);
\draw [line width=2pt] (9,3)-- (9,1);
\draw [line width=2pt] (9,1)-- (7,1);
\draw [line width=2pt] (7,1)-- (7,3);
\end{tikzpicture}
\caption{Figure 5. Minimal Dissection of an L-Tromino. An L-tromino polyomino is shown, with two straight line cuts separating it into three individual unit squares.}~\label{fign5}
\end{figure}
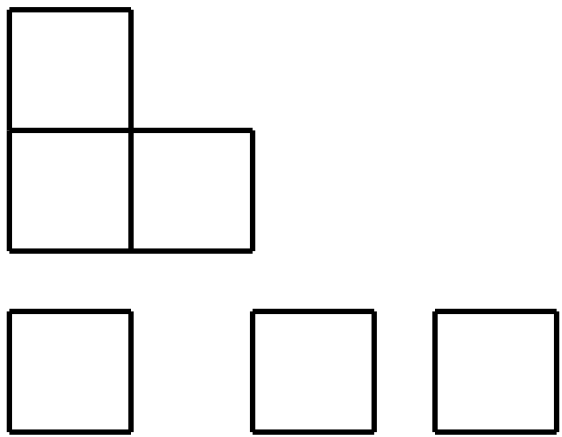

These examples~\ref{ex1} and \ref{ex2} demonstrate the practical application of the Minimal Dissection Path Problem and visually reinforce the validity of Conjecture 1. The simplicity of these cases allows for clear understanding of the cutting process and the resulting individual unit squares.

\subsection{Relevance and Adaptation for Classroom Use}
The Minimal Dissection Path Problem for Polyominoes, while rooted in theoretical mathematics, offers significant pedagogical value and can be effectively adapted for various classroom settings, from elementary to advanced levels. Its hands-on nature, combined with the underlying mathematical principles, makes it an engaging tool for fostering critical thinking, spatial reasoning, and problem-solving skills.

\subsubsection{Pedagogical Benefits}
\begin{enumerate}
  \item \textbf{Conceptual Understanding of Area and Decomposition:} At a fundamental level, dissecting polyominoes into unit squares reinforces the concept of area as a sum of individual units. Students can visually and kinesthetically understand that a polyomino with $N$ squares has an area of $N$ square units, and that these units can be separated \cite{Boaler}.
  \item \textbf{Spatial Reasoning and Visualization: }The process of identifying optimal cut lines requires strong spatial reasoning. Students must visualize how a cut will divide a shape and how the resulting pieces can be further separated. This enhances their ability to mentally manipulate two-dimensional objects \cite{Puig}.
  \item \textbf{Problem-Solving and Optimization:} The core of the problem is finding the minimum number of cuts. This introduces students to the concept of optimization-finding the most efficient solution to a problem. They will engage in trial and error, strategizing different cut placements, and evaluating their effectiveness, thereby developing robust problem-solving skills \cite{Ahdhianto}.
  \item \textbf{Introduction to Graph Theory and Algorithms (Advanced Levels):} For older or more advanced students, the problem can serve as an accessible introduction to graph theory. Each square can be considered a node, and the shared boundaries between squares as edges. The dissection problem then becomes analogous to finding a minimum cut in a graph to isolate all nodes. This can lead to discussions about algorithmic thinking and computational complexity.
  \item \textbf{Perseverance and Resilience:} As with many puzzles, students will encounter challenges and may not find the optimal solution immediately. This fosters perseverance and resilience, encouraging them to persist through difficulties and learn from their mistakes.
\end{enumerate}

\subsubsection{Classroom Activities and Adaptations}
\begin{enumerate}
  \item \textbf{Physical Polyomino Dissection:}
  \begin{enumerate}
      \item \textbf{Materials:} Provide students with polyominoes cut from paper, cardstock, or foam. They can be pre-drawn on a grid to emphasize the unit squares.
      \item \textbf{Activity:} Challenge students to cut the polyominoes into individual squares using the fewest possible cuts. They can record their cuts and the number of pieces obtained after each cut. This hands-on activity allows for direct manipulation and experimentation.
      \begin{enumerate}
          \item \textbf{Discussion:} After the activity, discuss their strategies. Did they find the $N-1$ cuts? Why or why not? What shapes were easier or harder to dissect?
      \end{enumerate}
  \end{enumerate}
\item Digital Interactive Tools:
\begin{enumerate}
    \item \textbf{Software:} Utilize geometry software or online interactive platforms that allow students to draw polyominoes and simulate cuts. This provides a clean, repeatable environment for experimentation and can track the number of cuts automatically.
\item  \textbf{Gamification:} Implement a gamified approach where students earn points for achieving the minimal number of cuts or for dissecting more complex polyominoes. Leaderboards can encourage friendly competition \cite{Cal}.
\end{enumerate}
\item \textbf{Problem-Solving Journals:}
\begin{enumerate}
    \item \textbf{Activity:} Have students document their dissection process in a journal. They can draw the polyomino, sketch their proposed cuts, and explain their reasoning for each cut. This promotes metacognition and allows teachers to assess their thought processes.
\end{enumerate}
\item \textbf{Polyomino Design Challenge:}
\begin{enumerate}
    \item \textbf{Activity:} Challenge students to design a polyomino that is particularly difficult or easy to dissect minimally. They would need to justify their design choices based on the principles of the Minimal Dissection Path Problem.
\end{enumerate}
\item \textbf{Connecting to Real-World Applications:}
\begin{enumerate}
    \item \textbf{Discussion:} Discuss how optimization problems, similar to finding minimal cuts, appear in various real-world scenarios, such as manufacturing (minimizing material waste), logistics (optimizing delivery routes), or computer science (efficient data processing). This helps students see the broader relevance of mathematical thinking.
\end{enumerate}
\end{enumerate}

By integrating the Minimal Dissection Path Problem into the curriculum through these varied activities, educators can transform a seemingly abstract mathematical concept into a tangible, engaging, and highly beneficial learning experience for students of all ages.

\section{Conclusion}
Mathematical folk puzzles, which are strongly steeped in cultural history, are an effective and fascinating teaching tool for students of all ages. This research investigates the educational importance of these puzzles, focusing on their function in developing critical thinking, logical reasoning, perseverance, and collaborative skills. Puzzles promote a good and resilient approach toward problem solving by translating mathematical learning into a playful and engaging experience. The literature delves into a variety of puzzle types, including geometric (Tangrams, dissection puzzles), logic, algebraic, probability (Monty Hall Problem, Birthday Paradox), and combinatorial challenges (Eight Queens Puzzle, Tower of Hanoi). It also explores modern modifications, such as digital and gamified approaches, to improve student involvement and comprehension. Furthermore, a unique idea, the "Minimal Dissection Path Problem for Polyominoes," is introduced and proven, demonstrating that a polyomino of N squares can be dissected into its constituent units using $\mathrm{N}-1$ straight-line cuts. This problem, along with others, provides practical classroom applications that teach fundamental mathematics concepts such as area, spatial reasoning, and optimization, making learning both pleasant and successful.


\begin{thebibliography}{99}
\bibitem{Ahdhianto} Ahdhianto, E., Marsigit, Haryanto, \& Nurfauzi, Y. (2020). Improving fifth-grade students' mathematical problem-solving and critical thinking skills using problem-based learning. Universal Journal of Educational Research, 8(5).\url{https://doi.org/10.13189/ujer.2020.080539}.

\bibitem{Aldila} Aldila Afriansyah, E., Herman, T., Turmudi, \& Afgani Dahlan, J. (2021). Critical thinking skills in mathematics. Journal of Physics: Conference Series, 1778(1).\url{https://iopscience.iop.org/article/10.1088/1742-6596/1778/1/012013}.

\bibitem{Alvarado} Alvarado, R. C. (2024). What's So Hard about the Monty Hall Problem? ArXiv Preprint ArXiv:2405.00884.

\bibitem{Bevan} Bevan, D. (2022). Threshold functions and the birthday paradox. In 2nd Annual Symposium on Switching Circuit Theory and Logical Design. \url{https://doi.org/10.1017/mag.2022.84}.

\bibitem{Boaler} Boaler, J. (2016). Mathematical Mindsets: Unleashing Students' Potential Through Creative Math, Inspiring Messages and Innovative. San Francisco: Jossey-Bass A Wiley Brand.

\bibitem{Cal} Cal Y. Mayor-Peña, F., Barrera-Animas, A. Y., Escobar-Castillejos, D., Noguez, J., \& Escobar-Castillejos, D. (2024). Designing a gamified approach for digital design education aligned with Education 4.0. Frontiers in Education, 9.\url{https://doi.org/10.3389/feduc.2024.1439879}.

\bibitem{Danesi} Danesi, M. (2020). An Anthropology of Puzzles; The Role of Puzzles in the Origins and Evolution of Mind and Culture. New York: Routledge.

\bibitem{Imperio} D'Imperio, M., Cangemi, F., \& Grice, M. (2019). Phrase Tones in Samoan form and function puzzles. Laboratory Phonology, 7(1). doi:\url{http://doi.org/10.5334/labphon}.

\bibitem{Feliciano} Feliciano-Semidei, R., Wu, K., \& Chaphalkar, R. M. (2022). Introducing conditional probability using the Monty Hall problem. Journal of University Teaching and Learning Practice, 19(2). \url{https://doi.org/10.53761/1.19.2.7}.

\bibitem{Flores} Flores-Bascuñana, M., Diago, P. D., Villena-Taranilla, R., \& Yáñez, D. F. (2020). On augmented reality for the learning of 3D-geometric contents: A preliminary exploratory study with 6-grade primary students. Education Sciences, 10(1).\url{https://doi.org/10.3390/educsci10010004}.

\bibitem{Iortser} Iortser, V., \& Abah, J. A. (2022). Mathematical Dimensions of Amenama Man Wankyo Children Play of the Tiv People of Iyon Shangev-Ya in Kwande Local Government Area of Benue State. Nigeria. VillageMath Educational Review (VER), 1. \url{https://dx.doi.org/10.5281/zenodo}.


\bibitem{Luo} Luo, S., Wang, D., \& Wang, R. (2023). A Study of Gamified Teaching Activities for Enhancing Motivation in Grade 6 Primary School Students -an Example of Instructional Design Model on a Field Course. SHS Web of Conferences, 180. \url{https://doi.org/10.1051/shsconf/202318004007}.

\bibitem{Malavolta} Malavolta, G., Aravinda, S., \& Thyagarajan, K. (2019). Homomorphic Time-Lock Puzzles and Applications. In Annual International Cryptology Conference, 34(1).\url{https://doi.org/10.1007/978-3-030-26948-7_22}.

\bibitem{Picciotto} Picciotto, H., \& R. Pemantle. (2024). There Is No One Way to Teach Math. New York: Routledge.\url{https://doi.org/10.4324/9781003473855}.

\bibitem{Pt} Pt, N., Pramestika, D., Gst, I., Wulandari, A. A., \& Sujana, I. W. (2020). Enhancement of Mathematics Critical Thinking Skills through Problem Based Learning Assisted with Concrete Media. Journal Of Education Technology, 4(3). \url{https://doi.org/10.23887/jet.v4i3.25552}.

\bibitem{Demaine} Demaine, E. D., \& O'Rourke, J. (2007). Geometric folding algorithms: linkages, origami, polyhedra. Cambridge university press.

\bibitem{Puig} Puig, A., Rodríguez, I., Baldeón, J., \& Múria, S. (2021). Children building and having fun while they learn geometry. Computer Applications in Engineering Education, 30(3). \url{https://doi.org/10.1002/cae}.

\bibitem{Rakshit} Rakshit, U., \& Dwivedi, N. (2023). What is the Title of this Paper? Solving logic puzzles using algorithms. ArXiv Preprint ArXiv:2309.13044.\url{https://doi.org/10.48550/arXiv.2309.13044}.

\bibitem{Vadim} Vadim Bulitko+Adi Botea. (2021). Evolving Romanian Crossword Puzzles with Deep Learning and Heuristic Search. IEEE.\url{https://doi.org/10.1109/CoG52621.2021.9619056}.

\bibitem{van} van Grinsven, I. L., Martin, E. C., Petrescu, A. J., \& Kormelink, R. (2022). Tsw - A case study on structure-function puzzles in plant NLRs with unusually large LRR domains. Frontiers in Plant Science, 13.\url{https://doi.org/10.3389/fpls.2022.983693}.


\end{thebibliography}
\end{document}